\newtheorem{theorem}{Theorem}
\newtheorem{lemma}{Lemma}
\newtheorem{assumption}{Assumption}
\newtheorem{remark}{Remark}
\newcommand{\E}{\mathbb{E}}
\newcommand{\1}{\mathbbm{1}}
\newcommand{\dd}{\mathrm{d}}
\title{Event Studies with Feedback\thanks{%
Botosaru: Department of Economics, McMaster University, email: \texttt{botosari@mcmaster.ca}. Liu: Department of Economics, University of Pittsburgh, email: \texttt{laura.liu@pitt.edu}. We thank Jeffrey Wooldridge and participants at the 2026 AEA annual meeting for helpful discussions. Botosaru gratefully acknowledges financial support from the Canada Research Chairs Program. All remaining errors are our own.}}
\author{Irene Botosaru \\
\textit{McMaster University} \and Laura Liu \\
\textit{University of Pittsburgh}}
\date{\today}
\begin{document}
\onehalfspacing
\maketitle

\begin{abstract}
  Event studies often conflate direct treatment effects with indirect effects operating through endogenous covariate adjustment. We develop a dynamic panel event study framework that separates these effects. The framework allows for persistent outcomes and treatment effects and for covariates that respond to past outcomes and treatment exposure. Under sequential exogeneity and homogeneous feedback, we establish point identification of common parameters governing outcome and treatment effect dynamics, the distribution of heterogeneous treatment effects, and the covariate feedback process. We propose an algorithm for dynamic decomposition that enables researchers to assess the relative importance of each effect in driving treatment effect dynamics.

\medskip
\noindent\textbf{Keywords: }Event study, heterogeneous treatment effects, dynamic panel data, sequential exogeneity, feedback mechanisms

\noindent\textbf{JEL classification: }C23, C21
\end{abstract}

\section{Introduction}

Event studies in the panel data setting aim to quantify how treatment effects evolve over time and vary across units. When outcomes are persistent and covariates adjust endogenously to past outcomes and treatment exposure, observed dynamic responses may not correspond to a single causal mechanism. Instead, they may combine direct effects with indirect effects operating through endogenous covariate adjustments. This distinction is empirically relevant when treatment triggers equilibrium adjustment. For example, a minimum wage policy may have a direct effect on wages but may also lead to changes in firm-level input demand, which may in turn affect individual wages. Separating these effects permits an assessment of how much of the dynamic response reflects direct versus indirect effects through equilibrium-induced covariate adjustment.

This paper develops a dynamic event study framework that separates these two effects. The \emph{direct structural effect} captures how outcomes respond to treatment holding the covariate path fixed, while the \emph{indirect adjustment effect} operates through covariates that evolve endogenously over time. We consider a panel event study setting with units $i = 1,\ldots,N$ observed over periods $t = 0,\ldots,T$. For exposition, we present a simplified version of the model. The outcome evolves in calendar time $t$ according to
\begin{equation}
\label{eq:outcome}
Y_{it} = \rho_Y Y_{i,t-1} + \alpha_i + X_{it}'\beta 
        + \sum_{j\in\mathcal{J}} D_{it}^j\delta_{ij} + U_{it},\quad U_{it} \stackrel{iid}{\sim} \mathcal N(0,\sigma^2_U),
\quad t = 1,\dots,T,
\end{equation}
where $Y_{it}\in\mathbb{R}$ is the scalar outcome, $X_{it}\in\mathbb{R}^K$ is a vector of time-varying covariates, and $\alpha_i$ captures unit-specific heterogeneity. Treatment is characterized by an event-time $t_{0i}\in\{1,\dots,T\}$. Let $\mathcal{J}\subset\mathbb Z$ denote a finite set of event-time indices for which dynamic treatment effects are specified. For each $j\in\mathcal{J}$, the event-time indicator $D_{it}^j = \1\{t - t_{0i} = j\}$ is a deterministic function of $t_{0i}$. The coefficients $\delta_{ij}$ represent unit-specific dynamic treatment effects at event time $j$ and capture the direct structural response to treatment.

Following \cite{botosaru2025time}, we impose a parsimonious dynamic structure on these heterogeneous treatment effects by assuming that they follow an autoregressive process in event time:
\begin{equation}
\label{eq:delta_ar}
\delta_{ij} = \rho_\delta \delta_{i,j-1} + \varepsilon_{ij},\quad \varepsilon_{ij} \stackrel{iid}{\sim} \mathcal N(0,\sigma^2_\varepsilon), \quad j=1,\dots,J,
\end{equation}
where $\rho_\delta$ is a common persistence parameter and $\varepsilon_{ij}$ are idiosyncratic innovations. Let $\lambda_i = (\alpha_i, \delta_{i0})'$ denote the vector of unobserved heterogeneity, and denote its conditional distribution by $$H(\lambda_i \mid \mathcal I_i^0), \quad \mathcal I_i^0 = \left\{Y_{i0},X_{i0},t_{0i}\right\}.$$ The framework readily extends to higher-order dynamics, continuous treatment intensities, and event-time leads for testing anticipation effects. Gaussianity in \eqref{eq:outcome} and \eqref{eq:delta_ar} is imposed for likelihood-based estimation, while the identification results do not require a correct Gaussian specification.

The parameter $\beta$ captures the indirect adjustment channel, mapping changes in
covariates into changes in outcomes. In many applications, these covariates are policy-reactive and adjust dynamically in response to treatment and past outcomes. As a result, treatment may affect outcomes not only directly, but also indirectly through sequentially exogenous covariate adjustments. Standard event study approaches typically rule out this feedback by imposing strict exogeneity of covariates. While \cite{botosaru2025time} relax this requirement to allow for sequential exogeneity, they treat the covariate process as given and do not model or identify the adjustment mechanism itself.

This paper extends their framework by explicitly modeling and identifying the covariate
feedback process. We allow covariates to be predetermined and impose a \emph{homogeneous
feedback} restriction, requiring the covariate adjustment rule to be common across units up to
observable conditioning variables. This restriction separates the two channels: the
distribution $H(\lambda_i \mid \mathcal I_i^0)$ governs unit-level heterogeneity in the direct
structural treatment effect, while $\beta$ and the identified feedback process capture the
indirect adjustment effects. The framework therefore permits the inclusion of policy-reactive
covariates and delivers an explicit decomposition of event study dynamics beyond
reduced-form comparisons, facilitating policy analysis in settings with sequentially exogenous covariate
adjustment.

\section{Model and Assumptions}

We denote the history of variables up to time $t$ as $Y_i^t = (Y_{i1}, \dots, Y_{it})'$ and $X_i^t = (X_{i1}', \dots, X_{it}')'$. Let $\theta = (\rho_Y, \rho_\delta, \beta, \sigma^2_U, \sigma^2_\varepsilon)'$ denote the vector of common structural parameters, and $\mathcal I_i^{t} = \left\{Y_i^{t},X_i^{t},\mathcal I_i^0\right\}$ denote the information set up to time $t$.

\begin{assumption}[Sequential Exogeneity]
\label{ass:seq_exog}
For all $t=1,\ldots,T$, conditional on $(\mathcal I_i^{t-1}, X_{it}, \lambda_i)$,
 $Y_{it}$ does not depend on future outcomes, covariates, or shocks; conditional on $(\mathcal I_i^{t-1},\lambda_i)$, $X_{it}$ does not depend on current shocks, nor on future outcomes, covariates, or shocks. Moreover, 
\[
\E[U_{it}\mid \mathcal I_i^{t-1}, X_{it}, \lambda_i]=0.
\]
\end{assumption}

To achieve point identification of the feedback process in short panels, we impose the following restriction on the feedback process, following \cite{bonhomme2025dynamics}.

\begin{assumption}[Homogeneous Feedback]
\label{ass:homog_feedback}
Let $f_t(\cdot\mid\cdot)$ denote the conditional density of $X_{it}$. For all
$t=1,\ldots,T$,
\begin{equation}
\label{eq:cov_adj}
f_t(X_{it} \mid \mathcal I_i^{t-1}, \lambda_i)
=
f_t(X_{it} \mid \mathcal I_i^{t-1})
\quad \text{a.s.},
\end{equation}
with the feedback factor defined below being positive on the support of $\mathcal I_i^T$:
\begin{equation}
\label{eq:g_def}
g(X_i^T \mid Y_i^T, \mathcal I_i^0)
=
\prod_{t=1}^T f_t(X_{it} \mid \mathcal I_i^{t-1}).
\end{equation}
\end{assumption}

Under Assumption \ref{ass:seq_exog}, covariates $X_{it}$ can depend on $\lambda_i$ through past outcomes in complex ways. Assumption~\ref{ass:homog_feedback} breaks this dependence by requiring the conditional covariate adjustment rule \eqref{eq:cov_adj} to be the same across individuals, regardless of their $\lambda_i$, conditional on the observable history $\mathcal I_i^{t-1}$. This restriction delivers a clean separation between the direct structural effect and the indirect adjustment effect. Conditional on $(X_i^T,\mathcal I_i^0)$, the likelihood kernel for $Y_i^T$ given $\lambda_i$ coincides with the representation studied in \cite{botosaru2025time}, while all information about the indirect adjustment effect is captured by $g(X_i^T \mid Y_i^T, \mathcal I_i^0)$. This factorization allows us to treat the feedback mechanism as a separately identified component and to apply the identification arguments of \cite{botosaru2025time} to the conditional distribution of $Y_i^T$ given $(X_i^T,\mathcal I_i^0)$ to identify the direct structural effect. The positivity condition on $g(X_i^T \mid Y_i^T, \mathcal I_i^0)$ ensures that this decomposition is well-defined.

\section{Identification}
\label{sec:identification}

We now establish that the model delivers point identification of both the direct and indirect effects. Specifically, we identify: (i) the common parameters $\theta$, (ii) the distribution of unobserved heterogeneity $H(\lambda_i \mid \mathcal I_i^0)$, which characterizes the direct structural effect, and (iii) the feedback process $f_t(X_{it} \mid \mathcal I_i^{t-1})$, which characterizes the indirect adjustment effect. 

\begin{theorem}[Identification]
\label{th:theorem}
Suppose $\left\{Y_{it},X_{it},\{D_{it}^j\}_{j\in\mathcal J}\right\}_{t=1}^T$ follow \eqref{eq:outcome} and
\eqref{eq:delta_ar}. Let Assumptions~\ref{ass:seq_exog}--\ref{ass:homog_feedback} hold, and suppose the regularity conditions of
\cite{botosaru2025time} for identification are satisfied (i.i.d.\
sampling over $i$, conditional independence of errors over calendar and event times,
nonvanishing and differentiable characteristic functions, and rank condition for the event study
design). Then, $\theta$, $H(\lambda_i\mid\mathcal I_i^0)$, and $\{f_t(X_{it}\mid \mathcal I_i^{t-1})\}_{t=1}^T$ are identified.
Moreover, the cohort-specific distribution $H(\lambda_i\mid t_{0i})$ and
unconditional distribution $H(\lambda_i)$ are identified.
\end{theorem}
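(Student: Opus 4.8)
The plan is to split the joint law of the data into a feedback factor that carries all information about the covariate adjustment rule and a kernel of exactly the form analyzed in \cite{botosaru2025time}, and then to read off the three objects in turn. Write $p(\cdot\mid\cdot)$ for generic conditional densities under the true data-generating process. The first and main step is the factorization
\begin{equation}
\label{eq:factorization}
p\bigl(Y_i^T,X_i^T\mid\mathcal I_i^0\bigr)
= g\bigl(X_i^T\mid Y_i^T,\mathcal I_i^0\bigr)\,
  \underbrace{\int \prod_{t=1}^T p\bigl(Y_{it}\mid X_{it},\mathcal I_i^{t-1},\lambda_i\bigr)\,\dd H(\lambda_i\mid\mathcal I_i^0)}_{=:~L(Y_i^T\mid X_i^T,\mathcal I_i^0;\,\theta,H)} .
\end{equation}
To obtain it, expand $p(Y_i^T,X_i^T\mid\mathcal I_i^0,\lambda_i)$ by the chain rule in the order $X_{i1},Y_{i1},\dots,X_{iT},Y_{iT}$, so that the $t$-th block is $p(X_{it}\mid\mathcal I_i^{t-1},\lambda_i)\,p(Y_{it}\mid X_{it},\mathcal I_i^{t-1},\lambda_i)$; by Assumption~\ref{ass:seq_exog} the second factor is the Gaussian conditional implied by \eqref{eq:outcome}--\eqref{eq:delta_ar}, and by Assumption~\ref{ass:homog_feedback} the first factor equals the $\lambda_i$-free density $f_t(X_{it}\mid\mathcal I_i^{t-1})$, so the product of the first factors is $g(X_i^T\mid Y_i^T,\mathcal I_i^0)$. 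Integrating against $H(\lambda_i\mid\mathcal I_i^0)$ and pulling the $\lambda_i$-free factor $g$ outside the integral gives \eqref{eq:factorization}.

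Identification of the feedback process is then immediate. The same reasoning shows that the conditional law of $X_{it}$ given the observed history carries no information on $\lambda_i$, i.e.\ $p(X_{it}\mid\mathcal I_i^{t-1})=f_t(X_{it}\mid\mathcal I_i^{t-1})$ for every $t$; each left-hand side is a feature of the distribution of observables and hence identified. This delivers $\{f_t(X_{it}\mid\mathcal I_i^{t-1})\}_{t=1}^T$ and therefore $g(X_i^T\mid Y_i^T,\mathcal I_i^0)=\prod_{t=1}^T f_t(X_{it}\mid\mathcal I_i^{t-1})$.

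For $\theta$ and $H(\lambda_i\mid\mathcal I_i^0)$, use positivity of $g$ on the support of $\mathcal I_i^T$ (Assumption~\ref{ass:homog_feedback}) to solve \eqref{eq:factorization} for the kernel, $L=p(Y_i^T,X_i^T\mid\mathcal I_i^0)/g$, whose right-hand side is identified by the previous step; equivalently $L=\prod_{t=1}^T p(Y_{it}\mid X_{it},\mathcal I_i^{t-1})$, the product of the sequential one-step-ahead outcome conditionals. Conditioning on $(X_i^T,\mathcal I_i^0)$ makes the terms $X_{it}'\beta$ in \eqref{eq:outcome} known offsets, so $L$ is exactly the integrated likelihood of the dynamic event-study model of \cite{botosaru2025time}: a mixture over $\lambda_i=(\alpha_i,\delta_{i0})'$, with mixing measure $H(\lambda_i\mid\mathcal I_i^0)$, of the Gaussian calendar-time and event-time recursions \eqref{eq:outcome}--\eqref{eq:delta_ar}. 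The stated regularity conditions --- i.i.d.\ sampling over $i$, conditional independence of $U_{it}$ and $\varepsilon_{ij}$ across calendar and event time, nonvanishing differentiable characteristic functions, and the rank condition on the event-study design --- are precisely those under which \cite{botosaru2025time} show such a kernel pins down the common parameters and the conditional distribution of heterogeneity; applying their result to the identified $L$ yields $\theta$ (including $\beta$, hence the size of the indirect channel) and $H(\lambda_i\mid\mathcal I_i^0)$.

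The remaining claims follow by marginalization: $\mathcal I_i^0=\{Y_{i0},X_{i0},t_{0i}\}$ is observed, so the laws $P(Y_{i0},X_{i0}\mid t_{0i})$ and $P(Y_{i0},X_{i0},t_{0i})$ are identified, and integrating $H(\lambda_i\mid Y_{i0},X_{i0},t_{0i})$ against them identifies $H(\lambda_i\mid t_{0i})$ and $H(\lambda_i)$. The main obstacle is the step \eqref{eq:factorization}: one must verify that Assumption~\ref{ass:seq_exog} delivers the stated sequential blocks --- in particular that the outcome factors involve no future covariates --- and, crucially, that Assumption~\ref{ass:homog_feedback} renders the covariate block $\lambda_i$-free, so that the mixing measure surviving in \eqref{eq:factorization} is $H(\lambda_i\mid\mathcal I_i^0)$ and not an $X_i^T$-informed posterior. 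Were feedback heterogeneous, the extracted kernel would not coincide with the \cite{botosaru2025time} object and their argument could not be invoked; positivity of $g$ is what makes the division legitimate, and everything after \eqref{eq:factorization} is bookkeeping and a citation.
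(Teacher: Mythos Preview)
Your proof is correct and follows essentially the same route as the paper: a sequential chain-rule factorization of the joint likelihood under Assumptions~\ref{ass:seq_exog}--\ref{ass:homog_feedback} that separates a $\lambda_i$-free feedback factor $g$ from an integrated kernel of the \cite{botosaru2025time} form, identification of each $f_t$ directly from observed conditionals, division by the positive $g$ to isolate the kernel, application of the cited identification result to obtain $\theta$ and $H(\lambda_i\mid\mathcal I_i^0)$, and marginalization for the cohort and unconditional distributions. Your additional remark that the extracted kernel equals $\prod_{t=1}^T p(Y_{it}\mid X_{it},\mathcal I_i^{t-1})$ and your discussion of why heterogeneous feedback would break the argument are accurate and sharpen the exposition, but do not alter the underlying strategy.
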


\begin{remark}[Time Effects]
\label{rem:gamma_nuisance}
It is conventional to allow for additive time effects $\{\gamma_t\}$ in the outcome equation. Lemma \ref{lem:gamma_demeaning} in the Supplemental Appendix shows that such effects potentially entering \eqref{eq:outcome} can be removed by cross-sectional demeaning. Since this transformation leaves $\theta$ and $\lambda_i$ unchanged, Theorem \ref{th:theorem} continues to apply provided Assumptions \ref{ass:seq_exog} and \ref{ass:homog_feedback} are interpreted for the demeaned process $\left\{\dot Y_{it},\dot X_{it}\right\}$ and the corresponding initial conditions in $\dot{\mathcal I_i^0}$. Thus $\{\gamma_t\}$ are pure nuisance parameters that do not affect identification of $H(\lambda_i \mid \dot{\mathcal I_i^0})$.
\end{remark}

\section{Counterfactual Analysis and Dynamic Decomposition}

The main contribution of our framework is that it enables the decomposition of dynamic treatment effects into direct and indirect effects, which allows counterfactual analysis incorporating both channels. The identified objects $\theta$, $H(\lambda_i \mid \mathcal I_i^0)$, and $\{f_t(X_{it} \mid \mathcal I_i^{t-1})\}_{t=1}^T$ characterize both the heterogeneous treatment effects and the dynamic adjustment of covariates following treatment.

The factorization in the proof of Theorem \ref{th:theorem} implies that in the first step, the estimation of $\theta$ and $H(\lambda_i \mid \mathcal I_i^0)$ can proceed as in
\cite{botosaru2025time}. In a second step, the homogeneous feedback process for covariates can be estimated by modeling the transition densities $f_t(X_{it}\mid \mathcal I_i^{t-1})$, using, e.g.,
parametric Markov models or sieve methods. Under our assumptions, the parameters governing the direct channel $(\theta,H)$ and the feedback mechanism enter the likelihood in separable blocks, so estimation of $(\theta,H)$ need not rely on a particular parametric specification of the feedback model,
provided the feedback factor $g$ is estimated consistently on the relevant support. The feedback model can therefore be selected to balance flexibility and parsimony depending on the intended decomposition and counterfactual exercises.

Given an alternative treatment timing $t_{0i}^\ast$ and/or alternative initial conditions $\left\{Y_{i0}^\ast,X_{i0}^\ast\right\}$, one can construct joint counterfactual paths $\left\{Y_i^{T,\ast},X_i^{T,\ast}\right\}$ as described in Algorithm \ref{alg:counterfactual}. Starting from the counterfactual initial conditions $\mathcal I_i^{0,\ast}$, latent heterogeneity $\lambda_i$ is drawn from $H(\lambda_i\mid\mathcal I_i^{0,\ast})$ and counterfactual treatment effects are generated according to \eqref{eq:delta_ar}. The system is then iterated forward in calendar time, drawing covariates from the estimated feedback process and constructing outcomes using \eqref{eq:outcome}, which combines the direct effect through $\{\delta_{ij}^\ast\}$ and the indirect effect through $X_{it}^\ast$ and $\beta$.

The resulting counterfactual paths decompose dynamic event study responses into a direct structural component, driven by latent heterogeneity and treatment effect dynamics, and an indirect component operating through sequentially exogenous covariate adjustments. This decomposition is empirically relevant because it distinguishes between treatment effects that arise from heterogeneous structural responses versus those that operate through equilibrium adjustments. For example, in a minimum wage study, the direct effect captures how firm productivity responds to the wage change, while the indirect effect captures how firm adjustments in hours or employment composition feed back into productivity. The framework provides an approach to quantifying each effect, enabling researchers to assess their relative importance in driving event study dynamics and facilitating counterfactual analysis that propagates both effects.

\begin{algorithm}[t]
\caption{Simulation of Counterfactual Paths $\left\{Y_i^{T,\ast},X_i^{T,\ast}\right\}$}
\label{alg:counterfactual}
\begin{algorithmic}[1]
\State \textbf{Input:} $\theta$, $H(\lambda_i \mid \mathcal I_i^0)$, $\left\{f_t(X_{it} \mid \mathcal I_i^{t-1})\right\}_{t=1}^T$, alternative timing $t_{0i}^\ast$, and/or alternative initial conditions $\left\{Y_{i0}^\ast,X_{i0}^\ast\right\}$.
\State Set $\left\{Y_{i0}^\ast,X_{i0}^\ast\right\}$ to observed or chosen counterfactual initial conditions.
\State Draw $\lambda_i = (\alpha_i,\delta_{i0})'$ from $H(\lambda_i \mid \mathcal I_i^{0,\ast})$, with $\mathcal I_i^{0,\ast} = \left\{Y_{i0}^\ast,X_{i0}^\ast,t_{0i}^\ast\right\}$.
\For{$j=1$ to $J$}
\State Draw error term: $\varepsilon_{ij}^\ast\sim\mathcal{N}(0,\sigma^2_\varepsilon)$.
\State Update treatment effect:
\(
\delta_{ij}^\ast = \rho_\delta \delta_{i,j-1}^\ast + \varepsilon_{ij}^\ast.
\)
\EndFor
\For{$t=1$ to $T$}
    \State Draw covariate:
    \(
    X_{it}^\ast \sim f_t(\cdot \mid \mathcal I_i^{t-1,\ast}).
    \)
    \State Draw error term: $U_{it}^\ast\sim\mathcal{N}(0,\sigma^2_U)$.
    \State Update outcome:
    \(
    Y_{it}^\ast =
    \rho_Y Y_{i,t-1}^\ast
    + \alpha_i
    + (X_{it}^\ast)'\beta
    + \sum_{j\in \mathcal J} D_{it}^{j,\ast}\delta_{ij}^\ast
    + U_{it}^\ast.
    \)
\EndFor
\State \textbf{Output:} Counterfactual paths 
$\left\{Y_i^{T,\ast}, X_i^{T,\ast}\right\}$.
\end{algorithmic}
\end{algorithm}

\section{Extensions}

First, the homogeneous feedback framework is not tied to the linear model in \eqref{eq:outcome}. In principle, one can replace the outcome equation by a nonlinear panel model $Y_{it} \sim f_\theta(\cdot \mid \mathcal I_i^{t-1}, X_{it}, \lambda_i),$ where $\theta$ is a finite-dimensional parameter and $\lambda_i$ collects unobserved unit-specific heterogeneity. Examples include dynamic logit or probit models, Poisson or negative binomial count models, and Tobit models for censored outcomes. Under Assumptions \ref{ass:seq_exog}--\ref{ass:homog_feedback}, once we factor out the feedback term, the conditional distribution of $Y_i^T\mid X_i^T, \mathcal I_i^0$ depends on $(\theta, H)$ only through an average over $\lambda_i$. Identification of $(\theta,H)$ can then proceed by applying the relevant nonlinear identification results to this conditional distribution, while the feedback process $g(X_i^T \mid Y_i^T, \mathcal I_i^0)$ remains a separately identified component.
  
The homogeneous feedback assumption in \eqref{eq:cov_adj} can be relaxed to allow for observed group-specific covariate adjustment rules. That is, let $G_i$ denote an observed group indicator, such as industry, region, or demographic category, and assume that for each $t$, $f_t(X_{it} \mid \mathcal I_i^{t-1}, G_i, \lambda_i)
=
f_t(X_{it} \mid \mathcal I_i^{t-1}, G_i).$ That is, conditional on observable history \emph{and} group membership, the covariate adjustment process does not depend on unobserved heterogeneity, but may vary across
groups. Under this modification, the likelihood factorization applies group by group. In particular, for each $g$, $\theta$ and $H(\lambda_i \mid \mathcal I_i^0, G_i = g)$ are identified as before.

\bibliographystyle{apalike}
\bibliography{references_feedback}

@article{botosaru2025time,
  title={Time-Varying Heterogeneous Treatment Effects in Event Studies},
  author={Botosaru, Irene and Liu, Laura},
  journal={arXiv preprint},
  year={2025},
  note={Discussion paper}
}

@unpublished{bonhomme2025dynamics,
  title={Back to Feedback: Dynamics and Heterogeneity in Panel Data},
  author={Bonhomme, St\'ephane},
  note={Working Paper},
  year={2025}
}

\clearpage

\appendix
\section{Proof of Theorem \ref{th:theorem}}
\begin{proof}
The proof proceeds by factoring the joint likelihood of the data to separate the feedback process for $X_i^T$ from the structural outcome model for $Y_i^T$.

Let $\mathcal{L}(\lambda_i; Y_i^T, X_i^T \mid \mathcal I_i^0)$ denote the likelihood of the observed trajectory for unit $i$ conditional on $\mathcal I_i^0$. By the law of total probability, the conditional likelihood of the observables given $\mathcal I_i^0$ is
\begin{equation}
\label{eq:marginal_lik_I}
f(Y_i^T, X_i^T \mid \mathcal I_i^0)
= \int \mathcal{L}(\lambda_i; Y_i^T, X_i^T \mid \mathcal I_i^0) \dd H(\lambda_i \mid \mathcal I_i^0).
\end{equation}
Using Assumption \ref{ass:seq_exog}, we decompose $\mathcal L(\lambda_i; Y_i^T, X_i^T \mid \mathcal I_i^0)$ as a product over time. For $t\ge 1$ the transition kernels depend on the history through $(\mathcal I_i^{t-1}, X_{it}, \lambda_i)$, and conditioning additionally on $\mathcal I_i^0$ does not change the transition densities. Thus we can write
\begin{align*}
\mathcal{L}(\lambda_i; Y_i^T, X_i^T \mid \mathcal I_i^0)
&= \prod_{t=1}^T f(Y_{it} \mid \mathcal I_i^{t-1}, X_{it}, \lambda_i)
                   f_t(X_{it} \mid \mathcal I_i^{t-1}, \lambda_i) \\
&= \underbrace{\left( \prod_{t=1}^T f(Y_{it} \mid \mathcal I_i^{t-1}, X_{it}, \lambda_i) \right)}_{=\mathcal{L}^{SE}(\lambda_i; Y_i^T \mid X_i^T, \mathcal I_i^0)} \times
   \underbrace{\left( \prod_{t=1}^T f_t(X_{it} \mid \mathcal I_i^{t-1}, \lambda_i) \right)}_{=g(X_i^T \mid Y_i^T, \mathcal I_i^0)}.
\end{align*}
By Assumption \ref{ass:homog_feedback}, the feedback term $g(X_i^T \mid Y_i^T, \mathcal I_i^0)$ defined in \eqref{eq:g_def} does not depend on $\lambda_i$ and is identified from the joint distribution $f(Y_i^T, X_i^T \mid \mathcal I_i^0)$ via standard factorization: for each $t=1,\ldots,T$,
\[
f_t(X_{it} \mid \mathcal I_i^{t-1}) = \frac{f(Y_i^{t-1}, X_i^t \mid \mathcal I_i^0)}{f(Y_i^{t-1}, X_i^{t-1} \mid \mathcal I_i^0)},
\]
where the numerator and denominator are identified by marginalization, and the positivity condition in Assumption \ref{ass:homog_feedback} ensures the ratio is well-defined.

Substituting \eqref{eq:g_def} into \eqref{eq:marginal_lik_I} and rearranging, we define the reweighted conditional quasi-likelihood
\begin{equation}
\label{eq:reweighted_I}
\tilde{f}(Y_i^T \mid X_i^T, \mathcal I_i^0) = \frac{f(Y_i^T, X_i^T \mid \mathcal I_i^0)}{g(X_i^T \mid Y_i^T, \mathcal I_i^0)}
= \int \mathcal{L}^{SE}(\lambda_i; Y_i^T \mid X_i^T, \mathcal I_i^0) \dd H(\lambda_i \mid \mathcal I_i^0).
\end{equation}
The right hand side of \eqref{eq:reweighted_I} matches the representation in \cite{botosaru2025time}, so applying their identification argument under the stated regularity conditions establishes the identification of $\theta$ and $H(\lambda_i \mid \mathcal I_i^0)$. Finally, $H(\lambda_i \mid t_{0i})$ and $H(\lambda_i)$ are identified by integrating $H(\lambda_i \mid \mathcal I_i^0)$ over the distribution of $\mathcal I_i^0$ conditional on $t_{0i}$ and unconditionally, respectively.
\end{proof}

\section{Supplemental Appendix}
\setcounter{section}{0}

\begin{lemma}[Elimination of Time Effects]
\label{lem:gamma_demeaning}
Suppose the outcome equation \eqref{eq:outcome} is augmented with additive time effects,
\[
Y_{it} = \rho_Y Y_{i,t-1} + \alpha_i + X_{it}'\beta 
        + \sum_{j \in \mathcal J} D_{it}^j \delta_{ij} + \gamma_t + U_{it},
\quad t=1,\dots,T.
\]
Define the cross-sectional averages
\[
\bar{Y}_t = \frac{1}{N} \sum_{i=1}^N Y_{it}, \quad
\bar{X}_t = \frac{1}{N} \sum_{i=1}^N X_{it}, \quad
\bar{\alpha} = \frac{1}{N} \sum_{i=1}^N \alpha_i, \quad
\bar{U}_t = \frac{1}{N} \sum_{i=1}^N U_{it},
\]
and
\[
\bar{D}_t(\delta) = \frac{1}{N} \sum_{i=1}^N \sum_{j\in\mathcal J} D_{it}^j \delta_{ij}.
\]
Let the demeaned variables be
\[
\dot{Y}_{it} = Y_{it} - \bar{Y}_t, \quad
\dot{X}_{it} = X_{it} - \bar{X}_t, \quad
\dot{\alpha}_i = \alpha_i - \bar{\alpha}, \quad
\dot{U}_{it} = U_{it} - \bar{U}_t.
\]
Then the demeaned outcome satisfies
\begin{equation}
\label{eq:demeaned_outcome}
\dot{Y}_{it}
= \rho_Y \dot{Y}_{i,t-1} 
  + \dot{\alpha}_i
  + \left( \sum_{j \in \mathcal J} D_{it}^j \delta_{ij} - \bar{D}_t(\delta) \right)
  + \dot{X}_{it}'\beta
  + \dot{U}_{it},
\end{equation}
so that the time effects $\{\gamma_t\}$ are eliminated from the dynamic equation for $\dot Y_{it}$.
\end{lemma}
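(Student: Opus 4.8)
The plan is to obtain \eqref{eq:demeaned_outcome} by a direct cross-sectional averaging argument, using the fact that $\gamma_t$, $\rho_Y$, and $\beta$ are common across units while $\alpha_i$ and $\delta_{ij}$ are not. First I would write the augmented outcome equation for a generic unit $i$ at date $t$,
\[
Y_{it} = \rho_Y Y_{i,t-1} + \alpha_i + X_{it}'\beta + \sum_{j\in\mathcal J} D_{it}^j\delta_{ij} + \gamma_t + U_{it},
\]
and average both sides over $i=1,\dots,N$. Because $\rho_Y$ and $\beta$ do not depend on $i$, averaging passes through the lagged-outcome and covariate terms to yield $\rho_Y\bar{Y}_{t-1}+\bar{X}_t'\beta$; because $\gamma_t$ is constant in $i$, its average is $\gamma_t$ itself; and the remaining terms average to $\bar{\alpha}$, $\bar{D}_t(\delta)$, and $\bar{U}_t$ by the definitions in the statement. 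This gives the cross-sectional mean equation
\[
\bar{Y}_t = \rho_Y\bar{Y}_{t-1} + \bar{\alpha} + \bar{X}_t'\beta + \bar{D}_t(\delta) + \gamma_t + \bar{U}_t .
\]

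Next I would subtract this mean equation from the unit-level equation. The $\gamma_t$ terms cancel identically, and grouping the differences term by term --- $Y_{it}-\bar{Y}_t=\dot{Y}_{it}$, $\rho_Y(Y_{i,t-1}-\bar{Y}_{t-1})=\rho_Y\dot{Y}_{i,t-1}$, $\alpha_i-\bar{\alpha}=\dot{\alpha}_i$, $(X_{it}-\bar{X}_t)'\beta=\dot{X}_{it}'\beta$, $U_{it}-\bar{U}_t=\dot{U}_{it}$, and the treatment-effect difference $\sum_{j\in\mathcal J}D_{it}^j\delta_{ij}-\bar{D}_t(\delta)$ --- reproduces \eqref{eq:demeaned_outcome} exactly, which is the claim. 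For the use made of this lemma in Remark~\ref{rem:gamma_nuisance}, I would additionally observe that the demeaned system retains the structural form \eqref{eq:outcome} with the same $\theta$ and the same latent innovations entering $\delta_{ij}$, with only the fixed effect recentered to $\dot{\alpha}_i$; hence the heterogeneity vector $\lambda_i$ is unchanged up to this location shift.

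There is no substantive obstacle here --- the argument is linear algebra of sample averages --- but two points deserve explicit comment in the write-up. First, the treatment-effect term cannot be collapsed further: since $\delta_{ij}$ is heterogeneous, $\bar{D}_t(\delta)$ is not a deterministic function of the event-time indicators alone, so it must be carried as the composite regressor $\sum_{j\in\mathcal J}D_{it}^j\delta_{ij}-\bar{D}_t(\delta)$ rather than absorbed into a single redefined coefficient. Second, the cancellation relies on $\rho_Y$ and $\beta$ being genuinely common across $i$; were either unit-specific, the averaging step would leave cross-unit interaction terms and the reduction would fail, which is why the lemma is stated for the model \eqref{eq:outcome}.
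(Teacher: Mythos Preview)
Your proposal is correct and follows essentially the same approach as the paper: average the augmented equation cross-sectionally, subtract the resulting mean equation from the individual-level equation, observe that $\gamma_t$ cancels, and regroup term by term using the demeaned-variable definitions. The additional remarks you make about the structure of $\bar D_t(\delta)$ and the reliance on common $\rho_Y$ and $\beta$ are accurate and useful context, though the paper's proof does not include them.
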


\begin{proof}
Averaging the augmented outcome equation over $i=1,\dots,N$ yields
\[
\bar{Y}_t 
= \rho_Y \bar{Y}_{t-1} 
  + \bar{\alpha} 
  + \bar{D}_t(\delta) 
  + \gamma_t 
  + \bar{X}_t'\beta 
  + \bar{U}_t.
\]
Subtracting this average from the individual equation gives
\begin{align*}
Y_{it} - \bar{Y}_t
&= \rho_Y (Y_{i,t-1} - \bar{Y}_{t-1}) + (\alpha_i - \bar{\alpha}) \\
&\quad + \left( \sum_{j \in \mathcal J} D_{it}^j \delta_{ij} - \bar{D}_t(\delta) \right)
     + (\gamma_t - \gamma_t)
     + (X_{it} - \bar{X}_t)'\beta
     + (U_{it} - \bar{U}_t).
\end{align*}
The term $(\gamma_t - \gamma_t)$ is identically zero. Using the definitions of the demeaned variables, this simplifies to
\[
\dot{Y}_{it}
= \rho_Y \dot{Y}_{i,t-1} 
  + \dot{\alpha}_i
  + \left( \sum_{j \in \mathcal J} D_{it}^j \delta_{ij} - \bar{D}_t(\delta) \right)
  + \dot{X}_{it}'\beta
  + \dot{U}_{it},
\]
which is exactly \eqref{eq:demeaned_outcome}. Hence the time effects $\{\gamma_t\}$ are eliminated from the dynamic equation for the demeaned outcome.
\end{proof}

\end{document}